\DeclareMathAlphabet\mathbfcal{OMS}{cmsy}{b}{n}
\DeclareFontFamily{OT1}{pzc}{}
\DeclareFontShape{OT1}{pzc}{m}{it}{<-> s * [1.10] pzcmi7t}{}
\DeclareMathAlphabet{\mathpzc}{OT1}{pzc}{m}{it}
\DeclareFontFamily{OT1}{bfpzc}{}
\DeclareFontShape{OT1}{bfpzc}{m}{it}{<-> s * [1.10] pzcmi7t}{}
\DeclareMathAlphabet{\mathbfpzc}{OT1}{bfpzc}{b}{it}
\newcommand{\bfc}
{\mathbfcal{C}}
\newcommand{\nc}
{\mathpzc{c}}
\newcommand{\bfp}
{\mathbf{r}}
\newcommand{\np}
{\mathpzc{r}}
 \newtheorem{Theorem}{Theorem}
\newtheorem{Lemma}{Lemma}
\begin{document}
\title{Age of Information Optimization with Preemption Strategies for Correlated Systems} 

% %%% Single author, or several authors with same affiliation:
% \author{%
%  \IEEEauthorblockN{Author 1 and Author 2}
% \IEEEauthorblockA{Department of Statistics and Data Science\\
%                    University 1\\
 %                   City 1\\
  %                  Email: author1@university1.edu}% }

%%% Several authors with up to three affiliations:

\author{\IEEEauthorblockN{Egemen Erbayat$^1$, Ali Maatouk$^2$, Peng Zou$^3$, Suresh Subramaniam$^1$}
\IEEEauthorblockA{
\textit{$^{1}$The George Washington University}, \textit{$^{2}$Yale University}, \textit{$^{3}$Nanjing University of Information Science and Technology}\\
$^1$\textit{\{erbayat, suresh\}@gwu.edu}, $^2$\textit{ali.maatouk@yale.edu}, $^3$\textit{003967@nuist.edu.cn}} 
}

\maketitle

\begin{abstract}
In this paper, we examine a multi-sensor system where each sensor monitors multiple dynamic information processes and transmits updates over a shared communication channel. These updates may include correlated information across the various processes. In this type of system, we analyze the impact of preemption, where ongoing transmissions are replaced by newer updates, on minimizing the Age of Information (AoI). While preemption is optimal in some scenarios, its effectiveness in multi-sensor correlated systems remains an open question. To address this, we introduce a probabilistic preemption policy, where the source sensor preemption decision is stochastic. We derive closed-form expressions for the AoI and frame its optimization as a sum of linear ratios problem, a well-known NP-hard problem. To navigate this complexity, we establish an upper bound on the iterations using a branch-and-bound algorithm by leveraging a reformulation of the problem. This analysis reveals linear scalability with the number of processes and a logarithmic dependency on the reciprocal of the error that shows the optimal solution can be efficiently found. Building on these findings, we show how different correlation matrices can lead to distinct optimal preemption strategies. Interestingly, we demonstrate that the diversity of processes within the sensors' packets, as captured by the correlation matrix, plays a more significant role in preemption priority than the number of updates.

\end{abstract}

\section{Introduction}

In sensor networks, maintaining data freshness is crucial to support diverse applications such as environmental monitoring, industrial automation, and smart cities \cite{kandris2020applications}. A critical metric for quantifying data freshness is the Age of Information (AoI), which measures the time elapsed since the last received update was generated \cite{yates2012}. Minimizing AoI is essential in dynamic environments, where obsolete information can result in inaccurate decisions or missed opportunities. Efficient AoI management involves balancing update frequency, data relevance, and network resource constraints to ensure decision-makers have timely and accurate information when required \cite{yates2021age}. The significance of AoI has led to extensive research on its optimization across various domains, including single-server systems with one or multiple sources \cite{modiano2015,mm1,sun2016,najm2018,soysal2019,9137714,yates2019,zou2023costly}, scheduling strategies \cite{modiano-sch-1,9007478,sch-igor-1,9241401,sch-li,sch-sun}, and analysis of resource-constrained systems \cite{const-ulukus,const-biyikoglu,const-arafa,const-farazi,const-parisa}. 

%\ali{A good transition here would be: one particular area that has been garnering focus by the AoI researchers and that is correalted systems. In fact, sensor networks often handle...}

Among the strategies for AoI minimization, packet preemption is regarded as a cornerstone approach for ensuring the timeliness of information in communication networks, especially when resources such as service rates are limited \cite{yates2021age}. By prioritizing critical updates, preemption ensures that the most valuable data reaches its destination promptly, as demonstrated in the context of single-sensor, memoryless systems \cite{kaul2012status,inoue2019general}. Beyond this specific scenario, numerous studies have extensively investigated its role in optimizing AoI across diverse settings. For example, \cite{maatouk2019age} analyzes systems with prioritized information streams sharing a common server, where lower-priority packets may be buffered or discarded. Similarly, \cite{wang2019preempt} and \cite{kavitha2021controlling} examine preemption strategies for rate-limited links and lossy systems, identifying in the process the optimal policies for minimizing the AoI.

On the other hand, one particular area that has been garnering focus among AoI researchers is correlated systems. In fact, sensor networks often handle correlated data streams, where relationships between data collected by different sensors can be leveraged to enhance decision-making, reduce redundancy, and improve overall system performance \cite{mahmood2015reliability,yetgin2017survey}. This correlation often arises when multiple sensors monitor overlapping areas or related phenomena, allowing them to collaboratively exchange information and optimize resource usage. The role of correlation in sensor networks has further been explored in studies focusing on its potential to optimize system efficiency and effectiveness \cite{he2018,tong2022,popovski2019,modiano2022,ramakanth2023monitoring,erbayat2024}.

% The importance of AoI and correlation in sensor networks has motivated extensive research into optimizing AoI within correlated sensor systems. For example, \cite{he2018} studied sensor networks with overlapping fields of view, proposing a joint optimization framework for fog node assignment and transmission scheduling to reduce the AoI of multi-view image data. Similarly, \cite{tong2022} focused on overlapping camera networks, introducing scheduling algorithms for multi-channel systems designed to minimize AoI. Other works, such as \cite{popovski2019, modiano2022}, leveraged probabilistic correlation models to formulate sensor scheduling strategies aimed at lowering AoI. Additionally, \cite{ramakanth2023monitoring} treated the correlation of status updates as a discrete-time Wiener process, developing a scheduling policy that balances AoI reduction with monitoring accuracy. Furthermore, \cite{erbayat2024} analyzed the impact of optimal correlation probabilities under varying environmental conditions, addressing the interplay between error minimization and AoI.

%\ali{On the other hand, Preemption in AoI systems has been widely studied...Also, Id say reduce the size of this paragraph} 

%\ali{I don't like this transition here. Talk about correlated systems in the previous paragraph and how AoI is of interest. Then, switch here to preemption is still open question. Do not focus on your paper as you did here}
As part of ongoing efforts in this area, the potential of leveraging interdependencies between sensors to reduce the AoI in correlated systems has been studied, but the benefits and challenges of employing preemption in multi-sensor systems with correlated data streams remain an open question. While preemption is a potential strategy to minimize AoI in a network, it is not always the optimal strategy \cite{yates2019}. This approach must account for the specific features of the packets being transmitted since preempting leads to prioritization. For example, a sensor with a lower arrival rate may track a unique process that no other sensor monitors, making its packets particularly valuable and critical to retain. On the other hand, preempting a packet from a sensor with a high arrival rate may not significantly reduce AoI, as the frequent updates from such sensors diminish the impact of losing a single packet.

%\ali{Here you make the connection between preemption and multi-sensor correlated systems}

%\ali{Its good to emphasize that we have correlation here so it is different than typical AoI system}.

To address this gap, this paper introduces adaptable and probabilistic preemption mechanisms that dynamically balance priorities across sensors, considering their unique correlation characteristics and resource demands. To that end, the main contributions of this paper are summarized as follows:

\begin{itemize}
    \item As a first step, we propose a system where the ability of a packet to preempt an ongoing transmission probabilistically depends on its source rather than being fixed or following deterministic rules. Subsequently, using stochastic hybrid system modeling, we derive a closed-form expression for AoI to analyze the impact of probabilistic preemption on network performance.
    
    %enabling a more adaptable approach to manage updates by giving higher-priority updates a better chance to preempt, ensuring information remains up-to-date.

    \item Following that, we investigate optimizing the total AoI in multi-sensor systems, considering the interplay between diverse sensors and shared resources. Building on this, we frame the problem of deciding optimal preemption strategies as a sum of linear ratios problem, which is generally an NP-Hard problem\cite{freund2001solving}. However, by analyzing its unique characteristics, we derive an upper bound on the number of iterations required to identify optimal preemption strategies using a branch-and-bound algorithm, thus ensuring computational efficiency in finding the optimal solution.
    %\ali{You are using a lot the , ensuring... it sounds very chatgpt liky, try to minimize those when possible. Also, talk about the bounds and the impact of these results on getting an efficient solution}
    \item Lastly, we validate our findings with numerical results and evaluate optimal preemption strategies to minimize AoI. Our findings demonstrate how correlation influences preemption strategies. Notably, when a source provides a lower aggregate number of updates while distributing them more evenly, the system prioritizes it for preemption, even if another sensor updates less frequently.\ifthenelse{\boolean{withappendix}}
{}
{\footnote{Due to space limitations, we present the proof details in \cite{technicalNote}.}}
 %\ali{Dont forget to put the right link}
\end{itemize}

%These results not only support the theory but also offer practical ideas for real-world use, such as in IoT networks, factories, or autonomous systems, where staying up-to-date is very important.

%The remainder of this paper is structured as follows. Section \ref{system-model} introduces the system model and key assumptions. In Section \ref{aoi-S}, we derive the closed-form expression for the AoI within the proposed system. Section \ref{aoi-opt} outlines the optimization problem and details the process of determining the optimal preemption probabilities. The numerical results are presented in Section \ref{numerical}, and the paper concludes with a summary and discussion in Section \ref{conc}.

\section{System Model}\label{system-model}

We consider a sensor network in which \(N\) sensors monitor \(M\) information processes. 
Each process is dynamic, with its state evolving over time. To ensure the monitor remains updated, each sensor generates status updates and transmits them through a shared server, as illustrated in Figure \ref{fig:system_model}. 
\begin{figure}[!t]
  \centering
  \includegraphics[width=0.35\textwidth]{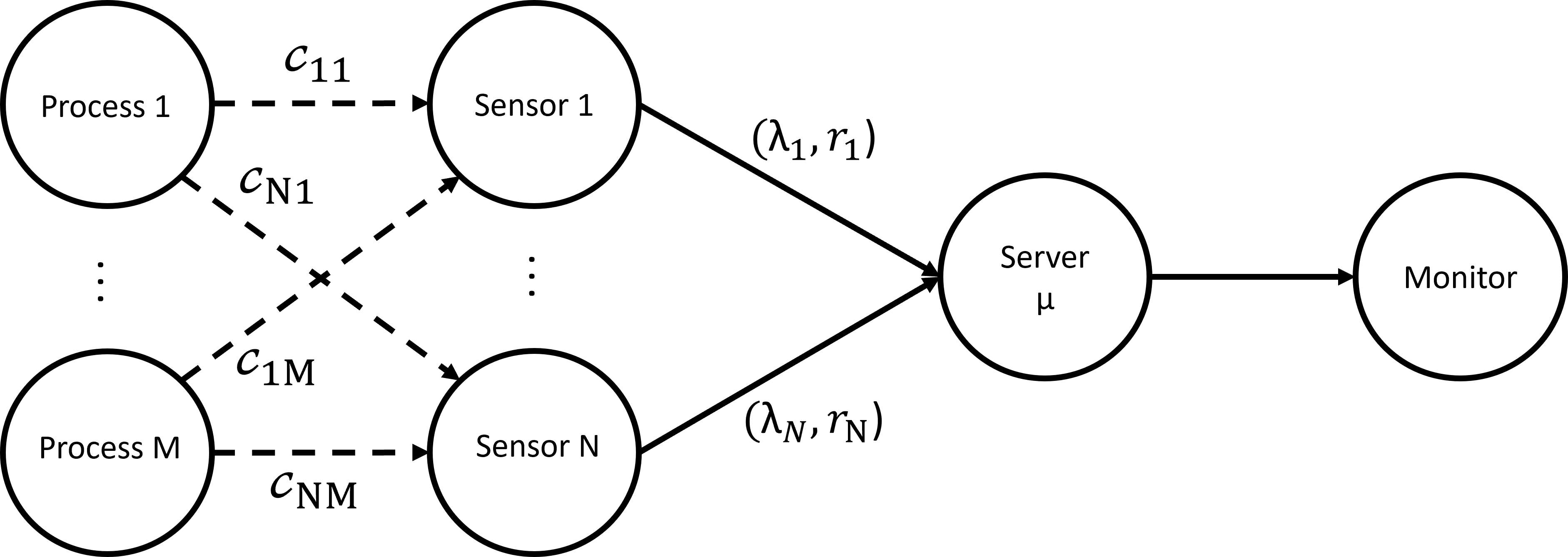}
  \caption{Illustration of our system model.}
  \vspace{-14pt}\label{fig:system_model}
\end{figure}
We assume that the service time of each packet follows an exponential distribution with a service rate \(\mu\). Additionally, sensor \(i\) generates packets following a Poisson process with an arrival rate of \(\lambda_i\). We adopt a zero-buffer with probabilistic preemption assumption, motivated by its effectiveness in minimizing AoI in systems with preemption under specific conditions \cite{bedewy}. While this may not strictly apply to our scenario, our findings, \cite{erbayat2024}, show that buffers do not consistently improve performance when there is no preemption. Furthermore, the zero-buffer assumption is intuitive with preemption, as arriving packets either preempt the one in service or are dropped. Thus, we maintain this assumption throughout our analysis, implying that an arriving packet finding the server busy is either dropped or preempts the packet in the server \cite{dataNetworks:book}, depending on the adopted preemption policy.
%\ali{I feel like we talk a lot here, let us reduce this a bit and just reference our previous work}

With these assumptions, we define \(\boldsymbol{\lambda}\) as the vector of arrival rates from different sensors, where \(\lambda_i\) represents the arrival rate from sensor \(i\) for \(i=1,\ldots,N\). Specifically, we express \(\boldsymbol{\lambda}\) as:
\begin{equation}
\boldsymbol{\lambda}^T = \begin{bmatrix}
\lambda_{1} & \lambda_{2} & \dots & \lambda_{N}
\end{bmatrix}.
\end{equation}

To model the correlation between sensor observations, we assume that a packet generated by sensor \(i\) contains information about process \(j\) with a correlation probability \(\nc_{ij}\). This information reflects the state of the process at the time of packet generation. The correlation matrix \( \bfc \in [0,1]^{N \times M}\) is defined as
\begin{equation} 
\bfc = \begin{bmatrix}
\nc_{11} & \nc_{12} & \dots & \nc_{1M} \\
\nc_{21} & \nc_{22} & \dots & \nc_{2M} \\
\vdots & \vdots & \ddots & \vdots \\
\nc_{N1} & \nc_{N2} & \dots & \nc_{NM}
\end{bmatrix}.
\end{equation}

Each sensor is also associated with a preemption probability, which defines the likelihood that a packet arriving from sensor \(i\) preempts the packet currently being served when the server is busy.\footnote{To be more general, the preemption probabilities may depend on both the sensor that generated the arriving packet and the sensor that generated the packet in service. We defer the examination of this case for future work.} To represent the preemption probabilities of all sensors, we use the vector \(\bfp \in [0,1]^{N}\), where the \(i\)-th entry, \(\np_i\), denotes the preemption probability corresponding to sensor \(i\). The vector \(\bfp\) is expressed as
\begin{equation}
\bfp^T = \begin{bmatrix}
\np_{1} & \np_{2} & \dots & \np_{N} 
\end{bmatrix}.
\end{equation}
%\ali{Nx1 is a vector not a matrix. Let use a vector instead. Also, is correlation fixed by sensor, and not for each sensor i and process m there $r_{im}$?} \suresh{Ali, each packet contains correlated info from multiple processes, so the preemption prob is associated with the sensor. Egemen, could we improve it by making the pre-prob dependent on both the new packet sensor and the sensor of the packet-in-service? We should at least acknowledge that possibility even if we don't do it in this paper. Also, the system model says almost nothing about the process and its states. I think you should elaborate on that a bit.} \egemen{Correlation is not fixed by sensor that is $\nc_{ij}$. The preemption probability $\np_i$ is fixed by the sensor. We can make it pre-prob dependent and define a $NxN$ matrix. What should be the best way to include this possibility because I do not have an analysis for it?. I have just added a sentence indicating that the process' state evolves over time because we do not consider error. Should I elaborate more? }\suresh{This is fine. I added a footnote for that case.}
In the next section, we present the closed-form derivation of the AoI under the described system model.

\section{Age of Information Analysis} \label{aoi-S}

%In this section, we consider the AoI measure introduced in \cite{yates2012} as the performance metric to evaluate the timeliness of updates at the monitor. 

For analytical convenience, we first simplify the system by focusing solely on the relevance of the updates rather than their origin. In the considered system, the source sensor of the packet containing information about any arbitrary process $j$ is irrelevant from the monitor's perspective. Instead, what matters is whether the served packet contains information about process $j$, regardless of which sensor provided the update to track AoI. To formalize this, we label a status update as informative for process $j$ if it contains information on process $j$. Otherwise, it is labeled as uninformative. We build on this concept by defining two types of informative arrivals:
\begin{itemize}
    \item Informative arrivals that can preempt ongoing service.
    \item Informative arrivals that cannot preempt ongoing service.
\end{itemize}
Using this distinction, we define the informative arrival rate vectors as follows
\begin{equation}
\boldsymbol{\Tilde{\lambda}}^T = \begin{bmatrix}
\Tilde{\lambda}_{1} & \Tilde{\lambda}_{2} & \dots & \Tilde{\lambda}_{M}
\end{bmatrix} = (\boldsymbol{\lambda}^T \odot \bfp^T) \bfc,
\end{equation}
\begin{equation}
\boldsymbol{\dot{\lambda}}^T = \begin{bmatrix}
\dot{\lambda}_{1} & \dot{\lambda}_{2} & \dots & \dot{\lambda}_{M}
\end{bmatrix} = (\boldsymbol{\lambda}^T \odot (1-\bfp^T)) \bfc,
\end{equation}
where \(\boldsymbol{\Tilde{\lambda}}\) represents the informative arrival rate vector for packets that can preempt ongoing service, and \(\boldsymbol{\dot{\lambda}}\) represents the informative arrival rate vector for packets that cannot preempt ongoing service. The total channel arrival rate is given as:
\begin{equation}
\lambda_C = \Tilde{\lambda}_C + \dot{\lambda}_C,
\end{equation}
where
\begin{equation}
\Tilde{\lambda}_C = \sum_{i=1}^{N} \lambda_i \np_{i} \text{ and }
\dot{\lambda}_C = \sum_{i=1}^{N} \lambda_i (1 - \np_{i}).
\end{equation}
represent the channel arrival rates for packets that can and cannot preempt, respectively.

With the above quantities in mind, we analyze the system by reducing it to $M$ independent systems, each consisting of two sources as depicted in Figure \ref{fig:equiv_model}. The independence of these $M$ systems arises from the Poisson nature of packet arrivals. For any single process $j$, the arrivals of both informative and uninformative packets from all other processes collectively form Poisson streams, as shown in \ifthenelse{\boolean{withappendix}}
{Appendix~\ref{reduction-P}}
{Appendix A in \cite{technicalNote}}.%\ali{Can we put this in appendix instead of citing our work?}

\begin{figure}[!t]
  \centering
  \includegraphics[width=0.35\textwidth]{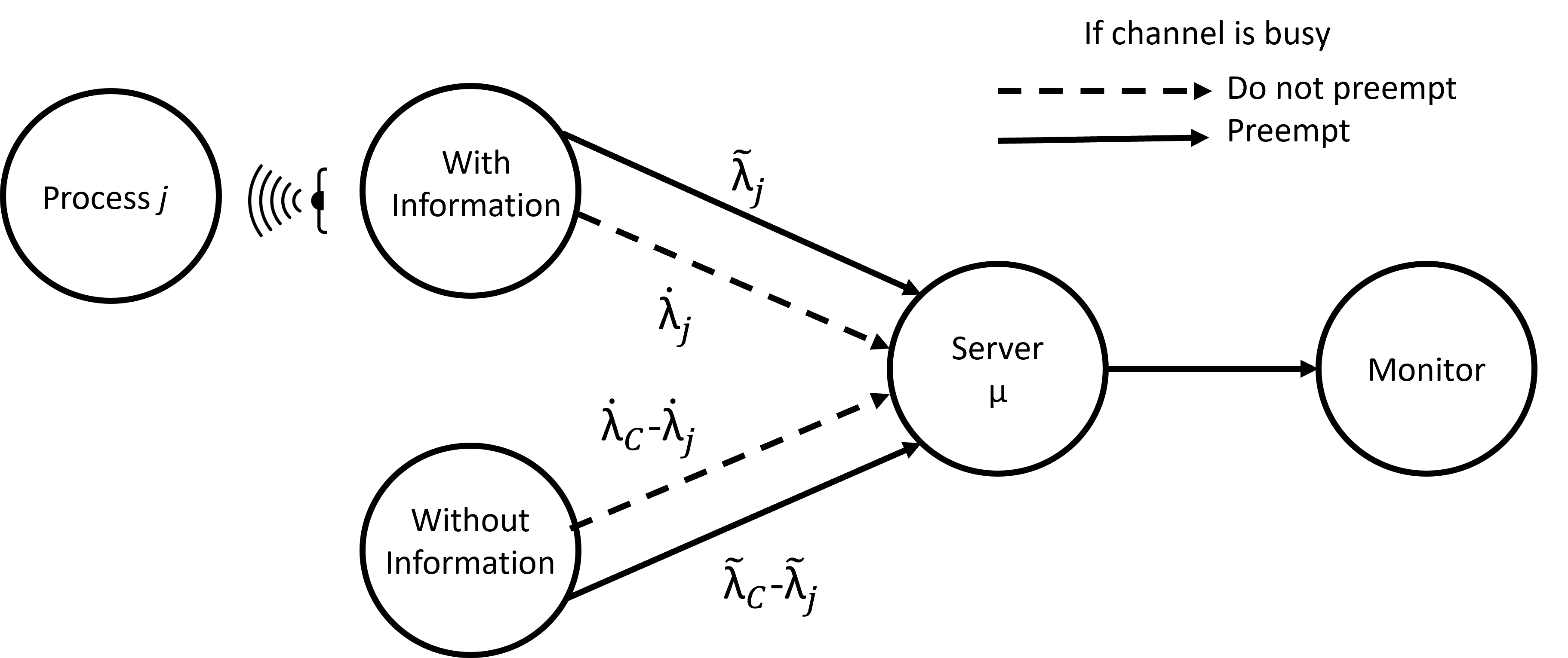}
  \caption{Equivalent system model from process $j$'s perspective.}
  \vspace{-12pt}\label{fig:equiv_model}
\end{figure}

With this setup established, we now proceed to evaluate the evolution of AoI for a single process. The AoI of process \(j\) at time \(t\), denoted by \(\Delta_j(t)\), is defined as
\begin{equation}
\Delta_j(t) = t - T_j, \quad j=1,\ldots,M,
\end{equation}
where \(T_j\) represents the time at which the most recent \textit{informative} packet for process \(j\) was generated. The AoI for each process \(j\) evolves as follows: it increases linearly over time until an informative status update is successfully received, at which point a drop in the AoI occurs. However, whether an incoming packet contributes to reducing the AoI of a specific process depends on two key factors: (1) whether the packet contains information about process \(j\), and (2) the server's preemption dynamics.

To analyze this further, we model the server's operation through three states: \(0\) (idle), \(1\) (busy processing an informative packet), and \(2\) (busy processing an uninformative packet). In state \(0\), the server is not processing any packets, and the AoI for process \(j\) increases linearly due to the absence of updates. Upon transitioning to state \(1\), the server processes a packet containing relevant information for process \(j\), resulting in a decrease in the AoI after the packet's service time. Conversely, in state \(2\), the server is busy processing a packet that lacks relevant information, so the AoI for process \(j\) continues to increase linearly. Furthermore, when a new packet arrives at the busy server, incoming packets can interrupt ongoing service with a probability determined by the preemption matrix \(\bfp\). A transition from state \(2\) to state \(1\) via preemption leads to a linear increase in AoI during service time, followed by a decrease in AoI if the informative packet is successfully served, while a transition from state \(1\) to state \(2\) leads to a linear increase in AoI. The interaction between service states and preemption dynamics thus determines the AoI behavior over time. Lemma \ref{Lem1} provides the stationary distribution of these states and forms the foundation for deriving the closed-form expression of the AoI. %\ali{it feels like so much words, can we make this shorter? Also, why the results are given as a remark instead of a Lemma?}

%As per our system model, a served packet may or may not have information about process \(j\). If the served packet contains information on process \(j\), the AoI for process \(j\) decreases just after the service time of that packet. Conversely, if the served packet lacks information about process \(j\), the AoI for process \(j\) continues to increase linearly.

%Furthermore, the preemption mechanism impacts the AoI dynamics. Wrhen a new packet arrives at the server, it can preempt the currently served packet with a probability determined by the preemption matrix \(\bfp\). If preemption occurs, the ongoing service is interrupted, and the new packet is served instead. This can potentially reduce the AoI more effectively if the preempting packet contains relevant information for process \(j\). However, if the preempting packet lacks such information, the AoI will remain unaffected and continue its linear growth.

\begin{Lemma}\label{Lem1}
The stationary distribution of the states (\(0\), \(1\), \(2\)) can be derived as follows:
\begin{align}
\pi_0 = \frac{\mu}{(\lambda_C + \mu)}, \quad
\pi_1 = \frac{\lambda_C\Tilde{\lambda}_{1} + \dot{\lambda}_{1}\mu + \Tilde{\lambda}_{1}\mu}{(\lambda_C + \mu)(\Tilde{\lambda}_{C} + \mu)}, \\
\pi_2 = \frac{\Tilde{\lambda}_{C}\lambda_C + \lambda_C\mu -\lambda_C\Tilde{\lambda}_{1}  - \dot{\lambda}_{1}\mu - \Tilde{\lambda}_{1}\mu}{(\lambda_C + \mu)(\Tilde{\lambda}_{C} + \mu)}. 
\label{pi-distributions}
\end{align}
\begin{proof}
The illustration of the Markov chain states and the details of the proof can be found in \ifthenelse{\boolean{withappendix}}
{Appendix~\ref{spv-appendix}}
{Appendix B in \cite{technicalNote}}. %\ali{Perhaps you can mention here that an illustration of the Markov chain can also be found in the appendix}
\end{proof}
\end{Lemma}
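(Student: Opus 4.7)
The plan is to treat the server's behavior as a continuous-time Markov chain on the states $\{0,1,2\}$ and recover the stationary distribution from the global balance equations together with the normalization $\pi_0+\pi_1+\pi_2=1$.

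First, I would enumerate the transition rates from the perspective of the tagged process $j$. When the server is idle, any arrival (aggregate rate $\lambda_C$) puts it into service, landing in state $1$ at rate $\tilde{\lambda}_1+\dot{\lambda}_1$ (the packet is informative for $j$) and in state $2$ at the complementary rate $\lambda_C-\tilde{\lambda}_1-\dot{\lambda}_1$. Service completions in either busy state return the chain to state $0$ at rate $\mu$. When the server is busy, only packets from the preempting stream can alter the state: their total rate is $\tilde{\lambda}_C$, of which the informative portion $\tilde{\lambda}_1$ drives transitions into state $1$ while the remaining $\tilde{\lambda}_C-\tilde{\lambda}_1$ drives transitions into state $2$. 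Self-loops (preemptions that do not flip the informative label) cancel from the balance equations and can be ignored.

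Next, the cut between $\{0\}$ and $\{1,2\}$ immediately gives $\lambda_C\pi_0=\mu(1-\pi_0)$, hence $\pi_0=\mu/(\lambda_C+\mu)$. For state $1$ I would write the balance
\begin{equation*}
(\mu+\tilde{\lambda}_C-\tilde{\lambda}_1)\pi_1=(\tilde{\lambda}_1+\dot{\lambda}_1)\pi_0+\tilde{\lambda}_1\pi_2,
\end{equation*}
eliminate $\pi_2$ through $\pi_2=1-\pi_0-\pi_1$, and collect terms to obtain $(\mu+\tilde{\lambda}_C)\pi_1=\dot{\lambda}_1\pi_0+\tilde{\lambda}_1$. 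Substituting the already-known $\pi_0$ and clearing denominators produces the stated closed form for $\pi_1$, and $\pi_2$ follows directly from the normalization identity, reproducing exactly \eqref{pi-distributions}.

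The main obstacle is the careful bookkeeping of preemption events in the busy regime: one must distinguish the preempting arrivals (rate $\tilde{\lambda}_C$) from the non-preempting ones (which cannot change the state when the server is busy) and, within the preempting stream, further separate the informative component $\tilde{\lambda}_1$ from its complement. Once the state diagram is correctly annotated with these rates, the remaining work is a routine solution of a $3$-state linear system.
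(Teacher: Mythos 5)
Your proposal is correct and follows essentially the same route as the paper: Appendix~\ref{spv-appendix} sets up the identical three-state Markov chain with exactly the transition rates you enumerate (including the $\tilde{\lambda}_1$ self-loop at state $1$, which the paper retains in $\mathbf{\pi D} = \mathbf{\pi Q}$ only because it matters for the later age analysis, while it cancels for $\pi$ just as you note) and solves the balance equations with normalization. Your cut between $\{0\}$ and $\{1,2\}$ followed by elimination of $\pi_2$ is merely a streamlined way of solving the same linear system, and it reproduces \eqref{pi-distributions} exactly.
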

With this model in place, we derive a closed-form expression for the AoI of each process, accounting for both informative and uninformative status updates with probabilistic preemption to comprehensively evaluate their impact on the system's dynamics. %as explained in Theorem \ref{The1}.

\begin{Theorem}\label{The1}
In the considered M/M/1/1 system, the average AoI for process $j$, denoted as $\Delta_j$, is

\footnotesize
\begin{align}
\Delta_j= \frac{\mu(\mu+\lambda_C)^2 + \sum_{i=1}^{N} \left(\mu\lambda_C \nc_{ij}(1-\np_{i}) + (\mu + \lambda_C)^2\np_{i}\right)\lambda_{i}}{
        \mu \sum_{i=1}^{N} \left(\mu + \lambda_C)(\lambda_C \np_{i} + \mu \right) \nc_{ij}\lambda_{i}}.
    \end{align}
\normalsize
    \end{Theorem}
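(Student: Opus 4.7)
The plan is to apply the stochastic hybrid systems (SHS) framework used by Yates and collaborators for AoI analysis. I would take the continuous state to be $\mathbf{x}(t)=[x_0(t),x_1(t)]$, where $x_0(t)=\Delta_j(t)$ is the AoI of process $j$ and $x_1(t)$ tracks the age of the packet currently in service; both grow at unit rate between jumps, and the discrete state $q(t)\in\{0,1,2\}$ is the server state from Lemma~\ref{Lem1}. First I would enumerate every discrete transition and its reset map. From $q=0$, an arrival that is informative for $j$ (total rate $\tilde{\lambda}_1+\dot{\lambda}_1=\sum_i\nc_{ij}\lambda_i$) triggers $0\to 1$ with $x_1\to 0$, while an uninformative arrival triggers $0\to 2$ with $x_1\to 0$. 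From $q=1$, service completion at rate $\mu$ gives $1\to 0$ with $x_0\to x_1$ (the AoI drops to the age of the delivered informative packet), a preempting informative arrival at rate $\tilde{\lambda}_1$ keeps the state at $1$ with $x_1\to 0$, and a preempting uninformative arrival at rate $\tilde{\lambda}_C-\tilde{\lambda}_1$ sends $1\to 2$ with $x_1\to 0$. From $q=2$ the transitions mirror those of $q=1$ except that the $\mu$-service completion now leaves $x_0$ untouched.

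Second, I would write the standard first-moment SHS equations for the quantities $\bar{v}_{q,k}:=\lim_{t\to\infty}E[\mathbf{1}\{q(t)=q\}\,x_k(t)]$:
\begin{equation}
\bar{v}_q\sum_{l\in L_q}\lambda^{(l)} \;=\; \pi_q\mathbf{1} + \sum_{l\in L'_q}\lambda^{(l)}\,\bar{v}_{q_l}A_l,
\end{equation}
where $L_q$ (resp.\ $L'_q$) is the set of outgoing (resp.\ incoming) transitions at $q$, $q_l$ is the predecessor of transition $l$, and $A_l$ is its reset matrix. The stationary probabilities $\pi_q$ are supplied by Lemma~\ref{Lem1}, so the right-hand side is completely determined once I solve the system. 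The average AoI of process $j$ is then recovered as $\Delta_j=\bar{v}_{0,0}+\bar{v}_{1,0}+\bar{v}_{2,0}$.

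Third, I would exploit the structure of the reset maps to decouple the computation. Because $x_1$ is reset to $0$ on every arrival into $q=1$ or $q=2$, the $\bar{v}_{q,1}$ equations form a closed $3\times 3$ linear system whose solution is directly proportional to the $\pi_q$ and to $1/(\tilde{\lambda}_C+\mu)$. Substituting these values into the equations for $\bar{v}_{q,0}$ produces a second small linear system that, after simplification, yields closed forms for $\bar{v}_{0,0},\bar{v}_{1,0},\bar{v}_{2,0}$ which I would then add and reduce to the right-hand side of Theorem~\ref{The1}. The main obstacle is the symbolic bookkeeping: correctly accounting for the two distinct preemption channels (informative vs.\ uninformative for $j$) from each busy state, and then collapsing the resulting rational expression—which naturally comes out in terms of $\tilde{\lambda}_1,\dot{\lambda}_1,\tilde{\lambda}_C,\lambda_C,\mu$—into the sensor-level quantities $\lambda_i,\np_i,\nc_{ij}$ appearing in the stated closed form, using identities such as $\tilde{\lambda}_1+\dot{\lambda}_1=\sum_i\nc_{ij}\lambda_i$ and $\tilde{\lambda}_C=\sum_i\np_i\lambda_i$.
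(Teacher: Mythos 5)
Your proposal matches the paper's own proof essentially step for step: the same SHS model with discrete states $\{0,1,2\}$, the identical transition rates and reset maps (Table~\ref{shs_table}), the stationary distribution from Lemma~\ref{Lem1}, and the same first-moment equations $\mathbf{\bar{v}D} = \mathbf{\pi B} + \mathbf{\bar{v}R}$ solved and summed as $\Delta_j=\sum_{q}\bar{v}_{q,0}$. The only cosmetic deviations---retaining the $2\to2$ self-loop at rate $\tilde{\lambda}_C-\tilde{\lambda}_1$ and letting $x_1$ grow in states where it is irrelevant---cancel out of the balance and moment equations, so they do not affect the result.
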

\begin{proof} The proof leverages stochastic hybrid system modeling, focusing on state transitions (idle, busy with informative, or uninformative packets). Full details are provided in \ifthenelse{\boolean{withappendix}}
{Appendix~\ref{aoi-appendix}}
{Appendix C in \cite{technicalNote}}. 
\end{proof}
%\ali{both remark1 and theorem 1 have the same appendix?} \egemen{yes.}
Leveraging the above results, we examine two specific scenarios of interest:
\begin{itemize}

    \item \textbf{Preempt every packet} (\( \bfp = 1 \)):
    \begin{align}
    \Delta_j = \frac{{\lambda}_C+\mu}{\mu \tilde{\lambda}_j} .
    \end{align}
    
    \item \textbf{Preempt nothing} (\( \bfp = 0 \)):
    \begin{align}
    \Delta_j = \frac{{\lambda}_C}{\mu ({\lambda}_C + \mu)} + \frac{{\lambda}_C + \mu}{\mu \dot{\lambda}_j}.
    \end{align}

\end{itemize}

For these specific cases, both $\dot{\lambda}_j$ and $\tilde{\lambda}_j$ are equal to $\sum_{i=1}^{N} \nc_{ij}\lambda_i$. Therefore, the AoI in the no-preemption scenario is equal to the sum of a positive constant and the AoI in the full-preemption scenario. Thus, regardless of the correlation, preempting every packet guarantees a lower AoI than the no-preemption strategy. As the service rate approaches infinity, the average AoI difference between the two scenarios decreases because the system can accommodate updates almost instantaneously, which minimizes the necessity for preemption. Beyond these two special cases, we investigate the AoI-optimal preemption policy for our system in the following section.

\section{Average Age Optimization}\label{aoi-opt}

Preemption is a possible strategy for minimizing AoI in networked systems, but its effectiveness depends on the specific context, such as sensor arrival rates and the correlation between updates. In multi-sensor systems with correlated processes, the goal is to identify when and how preemption can provide benefits. Thus, our objective is to find optimal preemption probabilities that minimize AoI for a given setup.

%\ali{Let us reduce this; we speak too much. Just say, the goal is to find out in what scenario is preemption going to help in this multi-sensors correlated systems, etc.}
%\suresh{I actually like this rationale but would suggest making it more concise. And it sounds like this is motivation and belongs in the Intro.}

To that end, the objective is to minimize the sum of AoI, $\Delta_{\text{sum}}$, as follows:
\begin{align}
\min_{\bfp \in [0,1]^{N}} \sum_{j=1}^{M}\Delta_j = \Delta_{\text{sum}}.
\end{align}

%\suresh{Pay attention to notation, you have an $NxM$ matrix for $R$ here. Also, did you define $\Delta_{\text{sum}}$}

%\ali{align it better}
The objective function can be reformulated as:
\begin{align}\label{objective_func}
    \min_{\bfp \in [0,1]^{N}}\Delta_{\text{sum}} =& 
    \sum_{j=1}^{M} \frac{\mathbf{g_j}^T \bfp + g_{j0}}{\mathbf{f_j}^T \bfp + f_{j0}} = \sum_{j=1}^{M} \frac{G_j(\bfp)}{F_j(\bfp)}
    \\ \nonumber
    &\text{subject to } 0 < \bfp < 1,
\end{align}
where
\vspace{-15pt}
\begin{align}
    g_{j0} &= \mu(\mu + \lambda_C)^2 + \sum_{i=1}^{N} \lambda_{i}\mu\lambda_C \nc_{ij}, \\
    g_{ji} &= ((\mu + \lambda_C)^2 - \mu\lambda_C \nc_{ij})\lambda_{i}, \\
    f_{j0} &= (\mu + \lambda_C) \mu^2 \sum_{i=1}^{N} \nc_{ij} \lambda_{i}, \\
    f_{ji} &= \lambda_C \mu (\mu + \lambda_C) \nc_{ij} \lambda_{i}.
\end{align} 

This is a classical sum of linear ratios problem studied in the literature. The problem is well known for its computational challenges \cite{schaible2003fractional}, and it is NP-hard, in general \cite{freund2001solving}. However, given the special case of our problem, as we will show afterward, we can achieve a global optimum efficiently using a branch-and-bound algorithm with a finite number of iterations \cite{JIAO2022112701}.

%\ali{say that it is NP-hard in general. also, don't say progress has been made, etc. Just say, however, given the special case of our problem, as we will show afterwards, we can achieve a global optimum in an efficient way.}

To address the global minimization of the objective function in (\ref{objective_func}), an essential step is the reformulation of the original problem into an equivalent problem (EP). This reformulation allows for more tractable computation while preserving the global properties of the original problem.

For clarity, consider the following notations: for each \( i = 1, 2, \dots, M \), define
\[
\begin{aligned}
    \bar{l}_i &= \min_{\bfp \in [0,1]^{N}} F_i(\bfp), \quad 
    \bar{u}_i = \max_{\bfp \in [0,1]^{N}} F_i(\bfp), \\
    \bar{L}_i &= \min_{\bfp \in [0,1]^{N}} G_i(\bfp), \quad 
    \bar{U}_i = \max_{\bfp \in [0,1]^{N}} G_i(\bfp).
\end{aligned}
\]

% \ali{can you explain what linear problems are there?} 
These bounds can be determined by solving \( 4 \times M \) linear programs, where each program involves optimizing \( F_i(\bfp) \) and \( G_i(\bfp) \) over the feasible region \( \bfp \in [0,1]^{N} \), separately minimizing and maximizing each function to yield the values of \( \bar{l}_i \), \( \bar{u}_i \), \( \bar{L}_i \), and \( \bar{U}_i \). Using these results, we define the feasible region $
\Omega = \{ (\beta, \alpha) \in \mathbf{R}^{2M} \mid \bar{l}_i \leq \beta_i \leq \bar{u}_i, \; \bar{L}_i \leq \alpha_i \leq \bar{U}_i, \; i = 1, 2, \dots, M \}.
$
%\suresh{Wouldn't it be more natural to call it $(\alpha, \beta)$?} \egemen{I followed the notation in the cited paper. Do you want me to change?}\suresh{No; didn't know that was how it was done in that paper.}
The problem can then be reformulated as the following EP:

\[
\text{EP}:
\begin{cases}
    \min \; h(\beta, \alpha) = \sum_{i=1}^M \frac{\beta_i}{\alpha_i}, \\
    \text{s.t. } \quad F_i(\bfp) - \beta_i = 0, \; i = 1, 2, \dots, M, \\
    \quad G_i(\bfp) - \alpha_i = 0, \; i = 1, 2, \dots, M, \\
    \quad \bfp \in [0,1]^N, \; (\beta, \alpha) \in \Omega.
\end{cases}
\]

The feasible region of the EP, denoted as $
Z = \{ F_i(\bfp) - \beta_i = 0, \; G_i(\bfp) - \alpha_i = 0, \; i = 1, 2, \dots, M, \; \bfp \in [0,1]^N, \; (\beta, \alpha) \in \Omega \},
$
is a bounded compact set. Importantly, \( Z \neq \emptyset \) holds if and only if \( [0,1]^N \neq \emptyset \).
If a solution is globally optimal for the EP, it can be converted into a globally optimal solution for the original problem, and vice versa, making the EP sufficient for addressing the original optimization problem \cite{JIAO2022112701}.

Moreover, reformulating the problem as the EP allows us to analyze its computational complexity. Specifically, leveraging the EP's structure, we derive an upper bound on the iterations required for a branch-and-bound algorithm to find a global solution. The upper bound on the number of iterations required for a branch-and-bound algorithm to solve the EP is established in Theorem~\ref{Theo2}, as follows.

%\ali{This paragraph and the one afterward, it feels we are speaking too much; let us make things more compact with the major stuff that makes us appear that we have done quite some stuff here.} 

\begin{Theorem}\label{Theo2}
For any given positive error \( \epsilon_0 \in (0, 1) \), the outer space accelerating branch-and-bound algorithm can find a global \( \epsilon_0 \)-optimal solution in at most 
\begin{equation}
M \cdot \left\lceil \log_2 \frac{4M (\mu+\lambda_C)^2\lambda_C^2}{\epsilon_0\mu^3\hat{\lambda}_{\min}^{2}} \right\rceil
\end{equation}
iterations, where \( \hat{\lambda}_{\min} = \min(\boldsymbol{\lambda}^T \bfc) \) denotes the minimum element of \( \boldsymbol{\lambda}^T \bfc \) over the feasible set.
\end{Theorem}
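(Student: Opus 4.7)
The plan is to adapt the general iteration-complexity argument for the outer-space branch-and-bound method of \cite{JIAO2022112701} to the specific structure of the EP above. That framework yields, for a sum-of-ratios problem of our form, an iteration count that scales as (number of branching dimensions) $\times \lceil \log_2(\text{initial width}/\text{target resolution})\rceil$, where the target resolution is dictated by how sensitively the objective $h(\beta,\alpha)$ depends on $(\beta,\alpha)$ inside the initial box $\Omega$. The task is therefore to instantiate these two quantities with the specific constants of our problem.

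First, I would compute explicit expressions for $\bar{l}_i, \bar{u}_i, \bar{L}_i, \bar{U}_i$. Since $F_j$ and $G_j$ are affine in $\bfp$ with strictly positive coefficients (note that $\lambda_C = \sum_i \lambda_i$ does not depend on $\bfp$), their extrema over $[0,1]^N$ are attained at the vertices $\bfp=0$ and $\bfp=1$. Direct substitution of the coefficients $f_{j0}, f_{ji}, g_{j0}, g_{ji}$ gives
\[
\bar{l}_j = f_{j0} = \mu^2(\mu+\lambda_C)(\boldsymbol{\lambda}^T\bfc)_j \geq \mu^2(\mu+\lambda_C)\hat{\lambda}_{\min},
\]
together with analogous closed forms for $\bar{u}_j, \bar{L}_j, \bar{U}_j$. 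In particular, this supplies a uniform positive lower bound on every denominator appearing in $h(\beta,\alpha)$ throughout $\Omega$, which is what allows a Lipschitz estimate at all.

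Next, I would bound the sensitivity of each ratio. Elementary calculus gives $|\partial(\beta_i/\alpha_i)/\partial\beta_i| \leq 1/\bar{l}_i$ and $|\partial(\beta_i/\alpha_i)/\partial\alpha_i| \leq \bar{U}_i/\bar{l}_i^2$, so the variation of the $i$-th ratio across any sub-rectangle of $\Omega$ of edge length $\eta$ is at most $\eta(1/\bar{l}_i + \bar{U}_i/\bar{l}_i^2)$. Requiring the sum of these variations across the $M$ ratios to remain below $\epsilon_0$ pins down a target edge length $\eta^\star$. Substituting the explicit values derived in the previous step---with $\bar{U}_i$ controlled by a $\mu(\mu+\lambda_C)^2$-type term and $\bar{l}_i$ scaling as $\mu^2(\mu+\lambda_C)\hat{\lambda}_{\min}$---produces a ratio of initial width to $\eta^\star$ of the form $4M(\mu+\lambda_C)^2\lambda_C^2/(\epsilon_0\mu^3\hat{\lambda}_{\min}^2)$.

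Finally, the longest-edge bisection strategy guarantees that after $\lceil \log_2(\text{initial width}/\eta^\star) \rceil$ bisections along each of the $M$ relevant branching directions of $\Omega$, every remaining sub-rectangle either satisfies the $\epsilon_0$-gap criterion or is pruned by the incumbent, yielding the claimed $M\lceil \log_2(\cdot) \rceil$ bound after multiplication. The principal obstacle will be the bookkeeping of constants that produce the precise factor $4M$ inside the logarithm: I will need to verify that the two partial-derivative contributions, the per-ratio error budget $\epsilon_0/M$, and the explicit extremal values of $F_i$ and $G_i$ combine so that intermediate factors cancel cleanly into the displayed expression, and to confirm that the coupling $F_i(\bfp)=\beta_i$, $G_i(\bfp)=\alpha_i$ indeed reduces the effective branching to $M$ (rather than $2M$) coordinate directions, as suggested by the form of the bound.
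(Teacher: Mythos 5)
Your proposal takes essentially the same route as the paper's proof: the paper likewise instantiates the iteration bound $p\cdot\lceil\log_2 (p\,\tau\,\delta(\Omega)/\epsilon_0)\rceil$ from Theorem 5 of \cite{JIAO2022112701} by computing the same vertex extrema of $F_j$ and $G_j$ at $\bfp=\mathbf{0}$ and $\bfp=\mathbf{1}$, bounding $\sum_{i=1}^{N}\nc_{ij}\lambda_i$ between $\hat{\lambda}_{\min}$ and $\lambda_C$, and using the fact that branching occurs only in the $M$-dimensional outer (denominator) space, exactly as you anticipated. Your Lipschitz-based sketch merely unpacks that cited theorem, whose constant $\tau=\max_i 4\bar{u}_i/\bar{L}_i^2$ already supplies the factor $4$ you were worried about bookkeeping; the one caution is notational --- the coupling should be read as $G_i(\bfp)$ giving the numerators $\beta_i$ and $F_i(\bfp)$ the denominators $\alpha_i$ (you inherited the swap from a typo in the main text's EP statement, as the appendix's assignment of $\bar{l}_i,\bar{u}_i$ to the $G$-extrema and $\bar{L}_i,\bar{U}_i$ to the $F$-extrema makes clear).
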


\begin{proof} 
The proof uses the compact and bounded feasible region properties of the reformulated optimization problem to derive the computational complexity. Full details are provided in \ifthenelse{\boolean{withappendix}}
{Appendix~\ref{iteration-appendix}}
{Appendix D in \cite{technicalNote}}.
\end{proof}

This upper bound highlights the effect of parameters on the iteration count. Increasing $\mu$ reduces the number of required iterations because it also decreases the AoI value. Therefore, there is a need for a reduction in $\epsilon_0$ to achieve a similar level of precision. Likewise, smaller $\hat{\lambda}_{\min}$ and larger $\lambda_C$ lead to higher AoI values, necessitating more iterations to maintain the same error tolerance $\epsilon_0$ as expected. For an example case with $M = 10$, $\mu = 5$, $\lambda_C = 15$, and $\epsilon_0 = 0.01$, the upper bound requires at most 82 iterations, which is not excessively high. %\ali{You did not give a small example to showcase how fast the algorithm converge for typical values. Let us make our case stronger here}

\section{Numerical Results}\label{numerical}

In this section, we present numerical results to validate the theoretical analysis and optimization model developed in the previous sections. For clarity and better visualization, we consider a system with two sensors and two processes in our numerical experiments. However, it is important to note that all analyses can be generalized to systems with more sensors and processes. We vary different system parameters to verify our theoretical results in Section \ref{aoi-S}. After verifying the theoretical analysis, we investigate the optimization model described in Section \ref{aoi-opt} under different conditions and provide the results.

%\ali{I don't think we should talk about 2D grid search...people will say why did you do the analysis if u want to do grid search. Take this example back to after Theorem 2. And just say that we optimize the system and provide the results, and that's it. }, using a 2D grid search with a step size of 0.002 for both preemption probabilities, $\np_1$ and $\np_2$. While Theorem \ref{Theo2} provides an upper bound requiring at most 82 iterations for an example case with $M = 10$, $\mu = 5$, $\lambda_C = 15$, and $\epsilon_0 = 0.01$ that is not excessively high. However, the grid search is particularly suitable here due to the low system complexity.

Figure \ref{fig:exp_res} compares analytical results with experimental results for varying service rates, arrival rates, and preemption probabilities. Our simulations are unit-time-based and were run for 1 million units of time. The lowest arrival rate is 0.5 arrivals per unit time, ensuring at least $5 \times 10^5$ arrivals for each process to guarantee convergence. Simulation results are represented as circles, while the theoretical results derived from our analysis are depicted as solid lines in the figures. The strong alignment between the two verifies the validity of our analysis.

\begin{figure}[t!]
    \centering
    % Top-left subfigure
    \begin{subfigure}[t]{0.22\textwidth}
        \centering
        \includegraphics[width=1\textwidth]{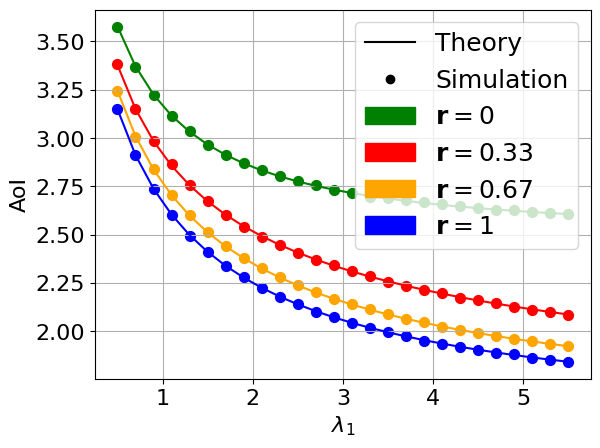}
        \caption{AoI versus $\lambda_1$ for different $\np_1=\np_2$ values with $\mu=2, \lambda_2=1,$ $\bfc= \begin{bmatrix}
1 & 0.5 \\
0.5 & 1
\end{bmatrix}$.}
        \label{fig:exp1}
    \end{subfigure}
    \hfill
    % Top-right subfigure
    \begin{subfigure}[t]{0.22\textwidth}
        \centering
        \includegraphics[width=1\textwidth]{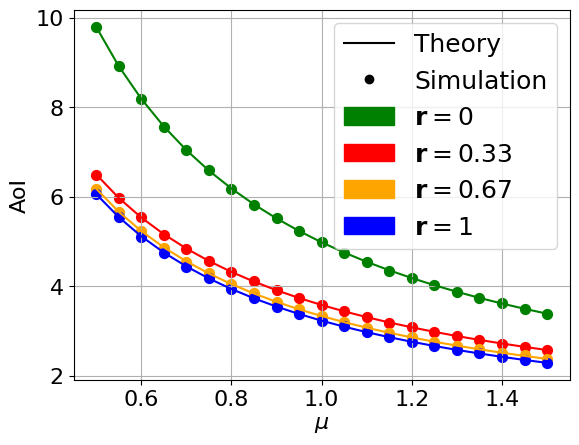}
        \caption{AoI versus $\mu$ for different $\np_1=\np_2$ values with $\lambda_1=1, \lambda_2=6,$ $\bfc= \begin{bmatrix}
1 & 0.5 \\
0.5 & 1
\end{bmatrix}.$}
        \label{fig:exp2}
    \end{subfigure}
    \caption{Simulation results vs theoretical findings.}
    \vspace{-14pt}
    \label{fig:exp_res}
\end{figure}

% \ali{Fix the figures after we decide on preemption probabilities; }

%\suresh{The captions for Fig. 3 (a) and (b) seem to be reversed.}

After that, we evaluate the optimal probabilistic preemption strategy for various correlations, arrival rates, and service rates. Figures \ref{l_res} and \ref{mu_res} illustrate the optimal probability values as well as their corresponding AoI values.  Additionally, the figures compare these optimal AoI results to the AoI values achieved under no preemption and full preemption scenarios. 

%\suresh{Clarify what you mean by aggregate updates.}

In Figure (\ref{fig:l1}), when the correlation matrix is an identity matrix, the preemption probability of Sensor 2 increases with $\lambda_1$ until reaching full preemption. On the other hand, Sensor 1’s probability starts high but decreases as $\lambda_1$ increases. This reflects a shift in importance toward Sensor 2 when Sensor 1’s arrival rate is higher. This behavior illustrates that packets with low arrival rates should preempt those with higher arrival rates when the correlation effect is eliminated. On the other hand, Figure (\ref{fig:l2}) shows the effect of the correlation matrix. Sensor 1 fully dominates due to its complete information about both processes and even at high \(\lambda_1\), preempting other packets with a packet from Sensor 2 is not preferable since it lacks any information about Process 1. In addition, Figure (\ref{fig:theta}) illustrates how optimal preemption evolves when each sensor fully tracks one process and partially observes the other with probability $\theta$. When $\theta$ is small, the sensor with the lower arrival rate is given higher preemption priority. However, as $\theta$ increases, the sensors become more similar in the information they provide. This reduces the distinction between them, leading to an optimal strategy of preempting every packet. These results underscore how the correlation matrix governs the preemption strategy and sensor roles.

\begin{figure}[h!]
    \centering
    % Top-left subfigure
    \begin{subfigure}[b]{0.24\textwidth}
        \centering
        \includegraphics[width=1\textwidth]{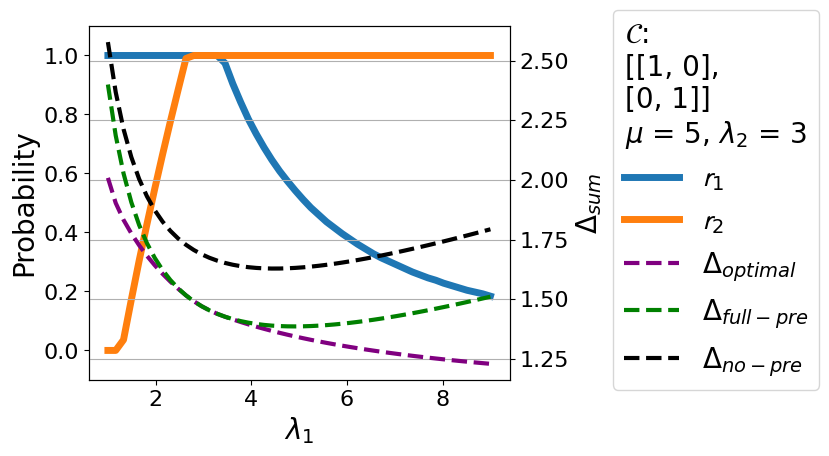}
        \caption{$\frac{}{}$ $\frac{}{}$$\frac{}{}$ $\frac{}{}$$\frac{}{}$$\frac{}{}$ $\frac{}{}$$\frac{}{}$ $\frac{}{}$}
        \label{fig:l1}
    \end{subfigure}
    \hfill
    % Top-right subfigure
    \begin{subfigure}[b]{0.24\textwidth}
        \centering
        \includegraphics[width=1\textwidth]{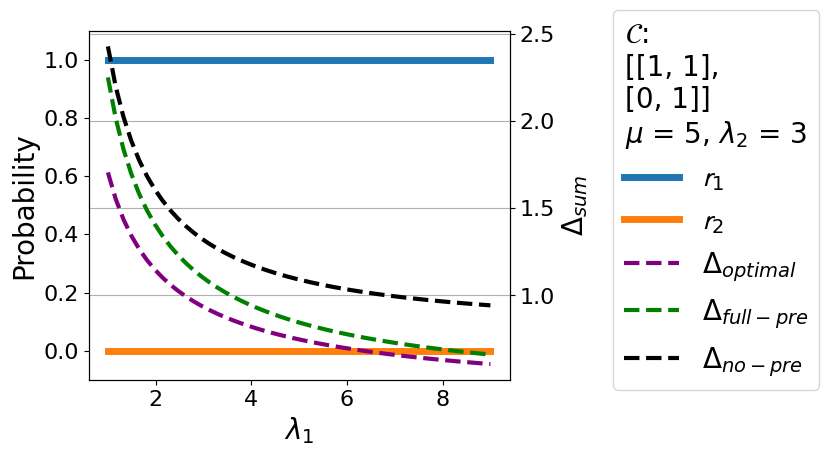}
        \caption{$\frac{}{}$ $\frac{}{}$$\frac{}{}$ $\frac{}{}$$\frac{}{}$$\frac{}{}$ $\frac{}{}$$\frac{}{}$ $\frac{}{}$}
        \label{fig:l2}
    \end{subfigure}
% \vspace{-10pt}
%     \vskip\baselineskip
%     % Bottom-left subfigure
%     \begin{subfigure}[b]{0.22\textwidth}
%         \centering
%         \includegraphics[width=0.9\textwidth]{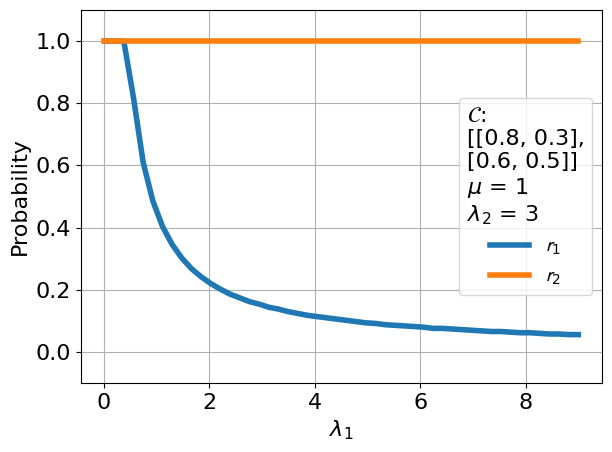}
%         \caption{}
%         \label{fig:l3}
%     \end{subfigure}
%     \hfill
%     % Bottom-right subfigure
%     \begin{subfigure}[b]{0.22\textwidth}
%         \centering
%         \includegraphics[width=0.9\textwidth]{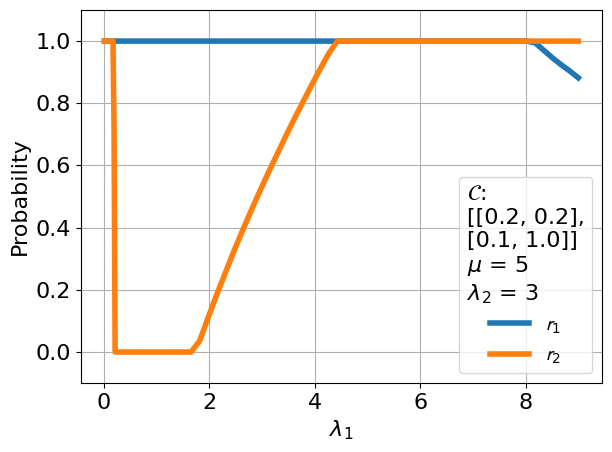}
%         \caption{}
%         \label{fig:l4}
%     \end{subfigure}
    \caption{Optimal preemption probabilities under varying $\lambda_1$ to show the effect of correlation matrix.}
    \label{l_res}
    \vspace{-8pt}
\end{figure}

Lastly, in Figure (\ref{fig:div}), we see that Sensor 2 has more priority (higher preemption probability) than Sensor 1, even though Sensor 1 has more aggregate updates on average, i.e., 
the sum of the elements in the corresponding row of the correlation matrix is larger for Sensor 1. This stems from Sensor 2 having more diverse updates and playing a bigger role in minimizing the sum AoI. Note that in a single-source memoryless system, preempting packets is generally beneficial for reducing the AoI. Specifically, self-preempting, where a packet preempts another packet with information from the same processes, is profitable in our system. An increase in preemption probability results in a gain from self-preemption. However, there is a trade-off because it also increases the probability of preempting an informative packet for additional or different processes that can negatively impact the AoI. When $\lambda_1$ is low, the benefit of self-preemption for Sensor 1 is higher than the loss of preempting a packet from Sensor 2. However, as $\lambda_1$ increases, the risk of losing informative packets from other processes also rises when a packet from Sensor 1 preempts. Consequently, the optimal preemption probability decreases as $\lambda_1$ grows. Notably, this decrease begins even when $\lambda_1$ is smaller than $\lambda_2$, making packets from Sensor 1 relatively rare. Nevertheless, the diversity of information provided by Sensor 2 makes its packets more valuable, even when the aggregate correlation is low.
\vspace{-12pt}

%it is better to let only packets from Sensor 2 preempt because there might be a risk of losing a packet with more information when a packet from Sensor 1 preempts. With the increase in $\mu$, service time reduces, and packets from Sensor 1 start preempting with some probability to obtain the optimal AoI because it reduces to AoI by preempting their own packets. However, this probability decreases after some threshold and goes to zero again. The reason is that when $\mu$ is high, the gain of self-preemption is lower than the loss of preempting a packet with more information.

\begin{figure}[h!]

    \centering
    % Top-left subfigure
     \begin{subfigure}[b]{0.24\textwidth}
        \centering
        \includegraphics[width=1\textwidth]{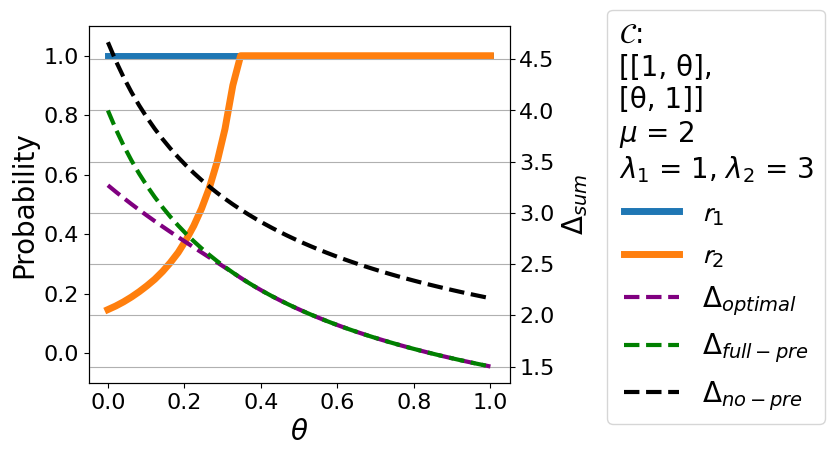}
        \caption{$\frac{}{}$$\frac{}{}$$\frac{}{}$$\frac{}{}$$\frac{}{}$$\frac{}{}$ $\frac{}{}$$\frac{}{}$ $\frac{}{}$}
        \label{fig:theta}
    \end{subfigure}
    \hfill
    % Top-right subfigure
    \begin{subfigure}[b]{0.24\textwidth}
        \centering
        \includegraphics[width=1\textwidth]{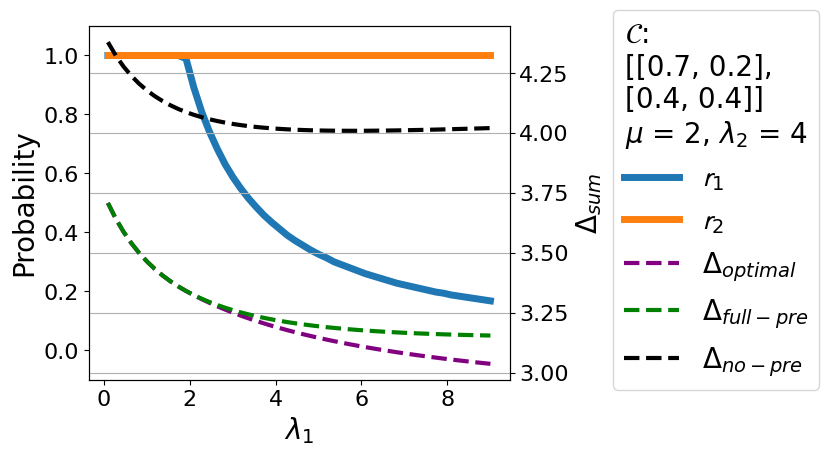}
        \caption{$\frac{}{}$$\frac{}{}$ $\frac{}{}$$\frac{}{}$$\frac{}{}$$\frac{}{}$ $\frac{}{}$$\frac{}{}$ $\frac{}{}$}
        \label{fig:div}
    \end{subfigure}   
\vspace{-15pt}
    \caption{Optimal preemption probabilities under different conditions.}
\vspace{-10pt}
    \label{mu_res}
\end{figure}

%\suresh{Make the lines bolder.}

%\suresh{I'm not sure this figure is interesting anymore; the diff between full-pre and optimal-pre AoI is tiny.}

%\suresh{There is so much theory on managing the complexity of the optimization. Would be good to present some results of that here. It's not even clear what the $\epsilon_0$ you used to get the results and how many iterations were needed.} \egemen{You are right. However, we use grid search to find the optimum here. How should I mention it?}\suresh{I'm not sure. How many iterations does gridsearch take and how does it relate to the UB you derived. Seems that your optimization is disconnected from the result of Thm 2. Is that the case?} \egemen{Yes, the UB is for a branch and bound algorithm. However, for this experiment, we check every possible point with step size 0.002 to find the optimal probability.}

\vspace{-4pt}
\section{Conclusion}\label{conc}

In this paper, we proposed and analyzed a probabilistic preemption strategy to minimize the AoI in correlated multi-sensor multi-process systems. We derived closed-form expressions for the AoI by applying stochastic hybrid system modeling. Next, we optimized source-specific preemption probabilities to minimize the sum AoI. The problem was framed as the sum of linear ratios, which is generally NP-hard. To address this, we reformulated the problem and provided an upper bound on the number of iterations required to find the optimal solution. This upper bound shows that the optimal scenario requires relatively few iterations. We verified our theoretical findings through experiments and demonstrated the optimal preemption strategy under various conditions. The results show that optimal preemption strategies can differ significantly based on the correlation matrix. Interestingly, packets with more diverse information probability are more important than packets with more information probability.

\section*{Acknowledgment}
This work was supported in part by NSF grant CNS-2219180.
\newpage

\bibliographystyle{ieeetr}
\bibliography{references}
\ifthenelse{\boolean{withappendix}}
{\appendices
\section{} 
\label{reduction-P}

To prove our argument, we apply the splitting property of the Poisson process. Let \( N(t) \) be a Poisson process with rate parameter \( \lambda \). If events are split into two groups with probabilities \( p \) and \( 1-p \), then the resulting processes \( N_1(t) \) and \( N_2(t) \) are independent Poisson processes with rate parameters \( p\lambda \) and \( (1-p)\lambda \) respectively \cite{splitting_poisson}.

From process \( j \)'s perspective, we can split arrivals from sensor \( i \) into two groups: informative and uninformative arrivals with probabilities \( \nc_{ij} \) and \( 1-\nc_{ij} \), respectively. The rate of arrivals from sensor \( i \) is \( \lambda_i \), so the rate of informative arrivals for process \( j \) from sensor \( i \) is \( \nc_{ij}\lambda_i \). Additionally, we can further split the informative arrivals based on whether they can preempt ongoing service. The rate of informative arrivals that can preempt ongoing service for process \( j \) from sensor \( i \) is \( \np_{i}\nc_{ij}\lambda_i \) and the rate of informative arrivals that can not preempt ongoing service for process \( j \) from sensor \( i \) is \( (1-\np_{i})\nc_{ij}\lambda_i \). Since all these arrivals are Poisson, we can merge them into a single process. The total arrival rate of informative packets that can preempt ongoing service for process \( j \) is given by

\begin{equation}
\Tilde{\lambda}_j = \sum_{i=1}^{N} \np_{i}\nc_{ij}\lambda_i
\end{equation}

Similarly, the total arrival rate of informative packets that can not preempt ongoing service for process \( j \) is

\begin{equation}
\Tilde{\lambda}_j = \sum_{i=1}^{N} (1-\np_{i})\nc_{ij}\lambda_i
\end{equation}

We can express these rates in vector form as follows:

\begin{equation}
\boldsymbol{\Tilde{\lambda}}^T = \begin{bmatrix}
\Tilde{\lambda}_{1} & \Tilde{\lambda}_{2} & \dots & \Tilde{\lambda}_{M}
\end{bmatrix} = (\boldsymbol{\lambda}^T \odot \bfp^T) \bfc,
\end{equation}
\begin{equation}
\boldsymbol{\dot{\lambda}}^T = \begin{bmatrix}
\dot{\lambda}_{1} & \dot{\lambda}_{2} & \dots & \dot{\lambda}_{M}
\end{bmatrix} = (\boldsymbol{\lambda}^T \odot (1-\bfp^T)) \bfc,
\end{equation}

The importance of the packet is whether it has information of process $j$ so  we can say that The system with $N$ sensors and arrival rates $\boldsymbol{\lambda}$ shown in Figure \ref{fig:system_model} equivalents to the system with two sources as shown in Figure \ref{fig:equiv_model} from process $j$'s perspective.

\section{}\label{spv-appendix}

We adopt the stochastic hybrid system (SHS) model as defined in \cite{yates2019}, with a key distinction: our model incorporates probabilistic preemption. The system dynamics are depicted in Figure \ref{fig:equiv_model} so we can analyze the AoI for any process $i$ and generalize it. First, the discrete state is denoted as $q(t) = q \in Q = \{0, 1, 2\}$, where $q = 0$ represents an idle server, and $q \in \{1, 2\}$ signifies that an update packet is currently being serviced. The continuous state is described as $x(t) = [x_0(t), x_1(t)]$, where $x_0(t)$ represents the current age of the process, and $x_1(t)$ captures the potential age if the packet in service is successfully delivered. Notably, $x_1(t)$ is irrelevant in state $0$ since no packet is in service. In state $1$, $x_1(t)$ corresponds to the age of the informative update being serviced. Conversely, in state $2$, where an uninformative update is in service, the completion of this update does not affect the process age, rendering $x_1(t)$ irrelevant in this state as well.

\begin{table}[h]
\centering
\caption{Table of Transitions for the Markov Chain in Figure \ref{fig:shs}.}
\begin{tabular}{c c c c c c}
\toprule
$l$ & $q_l \rightarrow q'_l$ & $\lambda^{(l)}$ & $\mathbf{xA}_l$ & $\mathbf{A}_l$ & $\mathbf{v}_{q_l}\mathbf{A}_l$ \\
\midrule
1 & $0 \rightarrow 1$ & $\Tilde{\lambda}_{1}+\dot{\lambda}_{1}$ & $\begin{bmatrix} x_0 & 0 \end{bmatrix}$ & \small $\begin{bmatrix} 1 & 0 \\ 0 & 0 \end{bmatrix}$ \normalsize & $\begin{bmatrix} v_{00} & 0 \end{bmatrix}$ \\
2 & $0 \rightarrow 2$ & $\lambda_{C}-\Tilde{\lambda}_{1}-\dot{\lambda}_{1}$ & $\begin{bmatrix} x_0 & 0 \end{bmatrix}$ & \small $\begin{bmatrix} 1 & 0 \\ 0 & 0 \end{bmatrix}$ \normalsize & $\begin{bmatrix} v_{00} & 0 \end{bmatrix}$ \\
3 & $1 \rightarrow 0$ & $\mu$        & $\begin{bmatrix} x_1 & 0 \end{bmatrix}$ & \small$\begin{bmatrix} 0 & 0 \\ 1 & 0 \end{bmatrix}$ \normalsize & $\begin{bmatrix} v_{11} & 0 \end{bmatrix}$ \\
4 & $1 \rightarrow 1$ & $\Tilde{\lambda}_{1}
$  & $\begin{bmatrix} x_0 & 0 \end{bmatrix}$ & \small$\begin{bmatrix} 1 & 0 \\ 0 & 0 \end{bmatrix}$\normalsize & $\begin{bmatrix} v_{10} & 0 \end{bmatrix}$ \\
5 & $1 \rightarrow 2$ & $\Tilde{\lambda}_{C}-\Tilde{\lambda}_{1}$  & $\begin{bmatrix} x_0 & 0 \end{bmatrix}$ & \small$\begin{bmatrix} 1 & 0 \\ 0 & 0 \end{bmatrix}$ \normalsize & $\begin{bmatrix} v_{10} & 0 \end{bmatrix}$ \\
6 & $2 \rightarrow 0$ & $\mu$        & $\begin{bmatrix} x_0 & 0 \end{bmatrix}$ & \small$\begin{bmatrix} 0 & 0 \\ 1 & 0 \end{bmatrix}$ \normalsize & $\begin{bmatrix} v_{20} & 0 \end{bmatrix}$ \\
7 & $2 \rightarrow 1$ & $\Tilde{\lambda}_{1}$  & $\begin{bmatrix} x_0 & 0 \end{bmatrix}$ & \small$\begin{bmatrix} 1 & 0 \\ 0 & 0 \end{bmatrix}$ \normalsize & $\begin{bmatrix} v_{20} & 0 \end{bmatrix}$ \\
\bottomrule
\end{tabular}
\label{shs_table}
\end{table}

\begin{figure}
    \centering
    \includegraphics[width=0.5\linewidth]{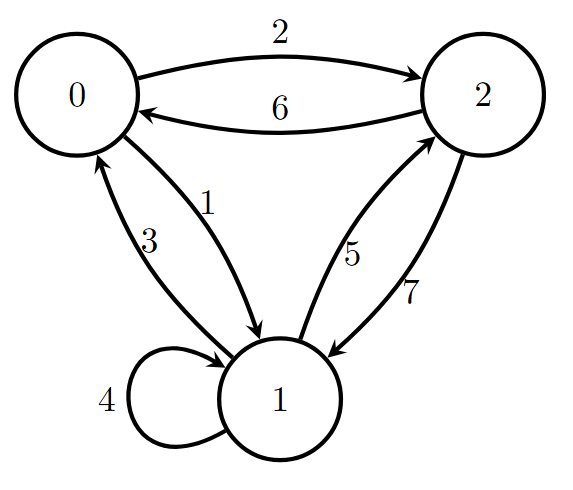}
    \caption{The Markov chain for updates.}
    \label{fig:shs}
\end{figure}

A Markov chain representing the discrete state $q(t)$ is depicted in Figure~\ref{fig:shs}. The corresponding transitions of the SHS at state $q_l$ are detailed in Table~\ref{shs_table}. In the figure, a directed edge $l$ from node $q$ to node $q'$ indicates that transitions from state $q$ to state $q'$ occur at an exponential rate $\lambda^{(l)}$, as specified in the table.

%\suresh{Incomplete.}

We first show that the stationary probability vector $\pi$ satisfies $
\mathbf{\pi D} = \mathbf{\pi Q} \quad \text{with}$ 
\begin{align}
\quad
\mathbf{D} = \text{diag}[\lambda_{C}, \mu + \Tilde{\lambda}_{C}, \mu + \Tilde{\lambda}_{1}], \quad  \\ \mathbf{Q} = 
\begin{bmatrix}
0 & \Tilde{\lambda}_{1}+\dot{\lambda}_{1} & \lambda_{C}-\Tilde{\lambda}_{1}-\dot{\lambda}_{1} \\
\mu & \Tilde{\lambda}_{1} & \Tilde{\lambda}_{C}-\Tilde{\lambda}_{1} \\
\mu & \Tilde{\lambda}_{1} & 0
\end{bmatrix}.
\end{align}
Applying $\sum_{i=0}^{2} \pi_i = 1$, the stationary probabilities are 
\begin{equation}
\pi_0 = \frac{\mu}{(\lambda_C + \mu)}, \label{pi0}
\end{equation}
\begin{equation}
\pi_1 = \frac{\lambda_C\Tilde{\lambda}_{1} + \dot{\lambda}_{1}\mu + \Tilde{\lambda}_{1}\mu}{(\lambda_C + \mu)(\Tilde{\lambda}_{C} + \mu)}, \label{pi1}
\end{equation}
\begin{equation}
\pi_2 = \frac{\Tilde{\lambda}_{C}\lambda_C + \lambda_C\mu -\lambda_C\Tilde{\lambda}_{1}  - \dot{\lambda}_{1}\mu - \Tilde{\lambda}_{1}\mu}{(\lambda_C + \mu)(\Tilde{\lambda}_{C} + \mu)} . \label{pi2}
\end{equation}

\section{}\label{aoi-appendix}

Given the SHS model and $\pi$ in Appendix \ref{spv-appendix}, we can evaluate $\bar{v}$ to find the AoI. Let 
\begin{equation}
\mathbf{\bar{v}} = [\mathbf{\bar{v}_0} \ \mathbf{\bar{v}_1} \ \mathbf{\bar{v}_2}] = [\bar{v}_{00} \ \bar{v}_{01} \ \bar{v}_{10} \ \bar{v}_{11} \ \bar{v}_{20} \ \bar{v}_{21}].   
\end{equation}
It follows that
\begin{equation}
\mathbf{\bar{v}D} = \mathbf{\pi B} +  \mathbf{\bar{v}R},
\end{equation}
where 
\begin{equation}
\mathbf{D} = \text{diag}[\lambda_C, \lambda_C, \mu + \Tilde{\lambda}_{C}, \mu + \Tilde{\lambda}_{C}, \mu + \Tilde{\lambda}_{1}, \mu + \Tilde{\lambda}_{1}],
\end{equation}
\begin{equation}
\mathbf{B} =
\begin{bmatrix}
1 & 0 & 0 & 0 & 0 & 0 \\
0 & 0 & 1 & 1 & 0 & 0 \\
0 & 0 & 0 & 0 & 1 & 0
\end{bmatrix},
\end{equation}
and
\begin{equation}
\mathbf{R} = 
\begin{bmatrix}
0 & 0 & \Tilde{\lambda}_{1}+\dot{\lambda}_{1}  & 0 & \lambda_{C}-\Tilde{\lambda}_{1}-\dot{\lambda}_{1} & 0 \\
0 & 0 & 0 & 0 & 0 & 0 \\
0 & 0 & \Tilde{\lambda}_{1} & 0 & \Tilde{\lambda}_{C}-\Tilde{\lambda}_{1} & 0 \\
\mu & 0 & 0 & 0 & 0 & 0 \\
\mu & 0 & \Tilde{\lambda}_{1} & 0 & 0 & 0 \\
0 & 0 & 0 & 0 & 0 & 0
\end{bmatrix}.
\end{equation}

Then, we obtain $\bar{v}_{01}=\bar{v}_{21} = 0 $ and 
\begin{align}
&
\label{pi_v}
\begin{bmatrix}
\bar{\pi}_0 & \bar{\pi}_1 & \bar{\pi}_1 & \bar{\pi}_2
\end{bmatrix}
= \\ \nonumber \hat{\mathbf{v}}&
\begin{bmatrix}
\lambda_{C} & -\Tilde{\lambda}_{1}-\dot{\lambda}_{1} & 0 & \Tilde{\lambda}_{1}+\dot{\lambda}_{1}-\lambda_{C} \\
0 & \mu + \Tilde{\lambda}_{C}-\Tilde{\lambda}_{1} & 0 & \Tilde{\lambda}_{1} - \Tilde{\lambda}_{C} \\
-\mu & 0 & \mu + \Tilde{\lambda}_{C} & 0 \\
-\mu & -\Tilde{\lambda}_{1} & 0 & \mu + \Tilde{\lambda}_{1}
\end{bmatrix}, \\
\text{where } 
\hat{\mathbf{v}} &= 
\begin{bmatrix}
\bar{v}_{00} & \bar{v}_{10} & \bar{v}_{11} & \bar{v}_{20}
\end{bmatrix}. \nonumber
\end{align}

After solving eq. (\ref{pi_v}) using eqs. (\ref{pi0}), (\ref{pi1}), and (\ref{pi2}), we determine $\mathbf{\bar{v}}$. Later, we find the average age of information using the formula for a single process $j$ $\Delta_j = \sum_{q=0}^2 \bar{v}_{10}$ as follows:

%\suresh{Is this what you defined as $\Delta_{\rm sum}$ earlier?}

\footnotesize
\begin{align}
\Delta_j = \frac{\lambda_{C}^{2} \tilde{\lambda}_C + \lambda_{C}^{2} \mu + \lambda_{C} \dot{\lambda}_1 \mu + 2 \lambda_{C} \tilde{\lambda}_C \mu + 2 \lambda_{C} \mu^{2} + \tilde{\lambda}_C \mu^{2} + \mu^{3}}{\mu \left(\lambda_{C}^{2} \tilde{\lambda}_1 + \lambda_{C} \dot{\lambda}_1 \mu + 2 \lambda_{C} \tilde{\lambda}_1 \mu + \dot{\lambda}_1 \mu^{2} + \tilde{\lambda}_1 \mu^{2}\right)}
\end{align}
\normalsize

\section{}\label{iteration-appendix}

In this section, we discuss the upper bound on the number of iterations required by the outer space accelerating branch-and-bound algorithm to achieve a global $\epsilon_0$-optimal solution. According to Theorem 5 in \cite{JIAO2022112701}, for any given positive error $\epsilon_0 \in (0, 1)$, the algorithm converges to the desired solution in at most
\begin{equation}
p \cdot \left\lceil \log_2 \frac{p\tau \delta(\Omega)}{\epsilon_0} \right\rceil 
\end{equation}
iterations.

Here, the symbols used in the theorem are defined as follows:

\begin{itemize}
    \item $\Omega \subseteq \mathbf{R}^p$ is a compact hyper-subrectangle, and $\delta(\Omega)$ is defined as:
    \begin{equation}
    \delta(\Omega) = \max_{i=1,2,\dots,p} \{ \bar{U}_i - \bar{L}_i \},    
    \end{equation}
    where $\bar{U}_i$ and $\bar{L}_i$ represent the upper and lower bounds of the $i$-th dimension of the rectangle $\Omega$.

    \item $\tau$ is defined as:
    \begin{equation}\label{tau_eq}
    \tau = \max_{i=1,\dots,p} \frac{4 \max\{|\bar{l}_i|, |\bar{u}_i|\}}{\min\{\bar{L}_i, \bar{U}_i, \bar{L}_i^2, \bar{U}_i^2\}},
    \end{equation}
    where the terms are determined as follows:
    \begin{align}
        \bar{l}_i &= \min_{y \in \Theta} n_i(y), \quad \bar{u}_i = \max_{y \in \Theta} n_i(y), \nonumber \\
        \bar{L}_i &= \min_{y \in \Theta} d_i(y), \quad \bar{U}_i = \max_{y \in \Theta} d_i(y).
    \end{align}

    \item The terms $n_i(y)$ and $d_i(y)$ come from the problem defined as:
    \begin{align}
    \quad \min f(y) = \sum_{i=1}^p \frac{n_i(y)}{d_i(y)}, \quad \nonumber \\ \text{s.t.} \; y \in \Theta = \{y \in \mathbf{R}^n \mid Ay \leq b \}. 
    \end{align}
    \end{itemize}

We can reformulate our problem to determine the upper bound using these definitions. The variable in our problem is $\bfp$, and the objective is specified in (\ref{objective_func}). There are $M$ different linear fractions in the objective. The numerators of these fractions increase as any element of $\bfp$ increases. Consequently, we obtain $\bar{l}_i$ when $\bfp = 0$ and $\bar{u}_i$ when $\bfp = 1$ as follows:
\begin{align}
            \bar{l}_i &= \mu(\mu + \lambda_C)^2 + \sum_{i=1}^{N} \lambda_{i}\mu\lambda_C \nc_{ij},\quad \bar{u}_i = (\mu + \lambda_{C})^3
\end{align}

 Similarly, the denominators of these fractions decrease as any element of $\bfp$ increases, leading to $\bar{L}_i$ when $\bfp = 0$ and $\bar{U}_i$ when $\bfp = 1$.
 \begin{align}
            \bar{L}_i &=  (\mu + \lambda_C) \mu^2 \sum_{i=1}^{N} \nc_{ij} \lambda_{i}, \quad \bar{U}_i =  (\mu + \lambda_C)^2 \mu \sum_{i=1}^{N} \nc_{ij} \lambda_{i}.
\end{align}

After that, $\delta(\Omega)$ becomes: 

\begin{align}
        \delta(\Omega) = \max_{i=1,2,\dots,M} \{(\mu + \lambda_C)\lambda_C \mu \sum_{i=1}^{N} \nc_{ij} \lambda_{i}\} \leq (\mu + \lambda_C)\lambda_C^2 \mu , 
\end{align}

Last, we find $\tau$. In our problem, all parameters and variables are positive, so both the nominators and the denominators are positive, which can help us simplify eq. (\ref{tau_eq}) and obtain $\tau$ as follows:

    \begin{align}
    \tau = \max_{i=1,\dots,M} \frac{4 \bar{u}_i}{\bar{L}_i^2} = \frac{4 (\mu + \lambda_{C})}{\mu^4 \hat{\lambda}_{\min}^2},\\ \nonumber
    \text{where } \hat{\lambda}_{\min} = \min(\boldsymbol{\lambda}^T \bfc)
    \end{align}

Putting all together, for any given positive error $\epsilon_0 \in (0, 1)$, the outer space accelerating branch-and-bound algorithm can seek out a global $\epsilon_0$-optimum solution in at most 
\begin{equation}
M \cdot \left\lceil \log_2 \frac{4M (\mu+\lambda_C)^2\lambda_C^2}{\epsilon_0\mu^3\hat{\lambda}_{\min}^{2}} \right\rceil
\end{equation}
iterations as shown in Theorem \ref{Theo2}.}
{}

\end{document}